\documentclass[11pt]{article}

\usepackage[T1]{fontenc}
\usepackage[utf8]{inputenc}
\usepackage[margin=1in]{geometry}
\usepackage{amsmath,amssymb,amsthm}
\usepackage{newtxtext,newtxmath}
\usepackage{booktabs}
\usepackage{array}
\usepackage{tabularx}
\newcolumntype{Y}{>{\raggedright\arraybackslash}X}
\usepackage{enumitem}
\usepackage{microtype}
\usepackage{xcolor}
\usepackage{float}
\usepackage{placeins}
\usepackage{tikz}
\usepackage{hyperref}
\usepackage[numbers,sort&compress]{natbib}
\usetikzlibrary{arrows.meta,positioning,calc,decorations.pathreplacing,patterns,shapes.geometric}

\definecolor{spData}{RGB}{0,114,178}
\definecolor{spForward}{RGB}{0,158,115}
\definecolor{spBackward}{RGB}{213,94,0}
\definecolor{spCallback}{RGB}{204,121,167}
\definecolor{spWait}{RGB}{230,159,0}
\definecolor{spOther}{RGB}{86,180,233}
\definecolor{spInk}{RGB}{40,40,40}

\newcolumntype{L}[1]{>{\raggedright\arraybackslash}p{#1}}
\setlist[itemize]{leftmargin=1.35em}
\setlist[enumerate]{leftmargin=1.6em}

\hypersetup{
  colorlinks=true,
  linkcolor=blue!50!black,
  citecolor=blue!50!black,
  urlcolor=blue!50!black,
  pdftitle={StageFrontier: Synchronization-Aware Stage Accounting for Distributed ML Training},
  pdfauthor={Boram Yoon, Wei Chen, Ville Kallioniemi},
  pdfkeywords={distributed training, PyTorch, profiling, stragglers, synchronization, performance}
}
\newcommand{\sys}{StageFrontier}
\newcommand{\ranks}{\mathcal{R}}

\newcommand{\steps}{\mathcal{T}}
\newcommand{\argmaxop}{\operatorname*{arg\,max}}
\newcommand{\median}{\operatorname{median}}

\theoremstyle{definition}

\newtheorem{proposition}{Proposition}

\newtheorem{theorem}{Theorem}

\title{\sys{}: Synchronization-Aware Stage Accounting for Distributed ML Training}

\author{
  Boram Yoon \\ NVIDIA \\ \texttt{byoon@nvidia.com}
  \and
  Wei Chen \\ NVIDIA \\ \texttt{weich@nvidia.com}
  \and
  Ville Kallioniemi \\ NVIDIA \\ \texttt{vkallioniemi@nvidia.com}
}

\date{}

\begin{document}
\maketitle

\begin{abstract}
When a distributed training job slows down, the hard part is knowing where to look. Synchronization hides the cause: a stall on one rank shows up as a wait on the others, so a data delay on a single rank can surface as backward time across the group. The cheap dashboards that run all the time --- per-stage averages and maxima --- misread this, double-counting the same exposed delay or burying the slow rank in an average, while full profilers see it clearly but are far too heavy to leave on.

\sys{} is an always-on signal that closes this gap. Each rank reports only a short ordered vector of coarse stage durations --- data, forward, backward, and so on --- timed with CPU wall-clock, with no synchronized clocks and no kernel tracing. At each stage boundary, \sys{} takes the cumulative time of whichever rank is furthest along; the increments of this \emph{frontier} form an exact, additive accounting of the step's exposed time and point to the stage and rank where group-visible delay first appears, telling an operator where to aim a heavy profiler, not which fix to make. The accounting is exact, but the coarse signal alone cannot tell whether a leading stage truly \emph{caused} the slowdown or merely ran alongside it; \sys{} labels the windows where that distinction needs more evidence instead of guessing.

A PyTorch implementation adds under $0.2\%$ throughput overhead through 128 ranks on Gloo and NCCL, places injected faults among its top two suspects on all 50 rows of a hidden-rank DDP test, and recovers the same top-stage routing as PyTorch Profiler, HTA, and Nsight Systems once their traces are reduced to the same coarse stages --- from a $0.11$\,MB summary instead of a $15.81$\,GB trace.
\end{abstract}

\section{Introduction}

When a distributed training job is slow, the first question is where the delay actually originated, not just where it appears in stage timers. Synchronization displaces visible symptoms: a slow data stage on one rank surfaces as backward wait on the others, so the stage with the largest timer is often downstream from the real cause. The \emph{exposed} delay --- the portion not hidden under overlap, which genuinely adds to step time --- could be anywhere in the pipeline: data loading, forward/loss, backward, callbacks, or optimizer work, on any rank. A wrong first answer is expensive: operators collect heavy traces, drain nodes, or rerun a multi-thousand-GPU job before discovering the symptom was a synchronization-displaced reflection of an upstream stall on a different rank. Modern systems make this harder: DDP overlaps gradient communication with backward through CUDA streams, FSDP-style sharding inserts all-gather and reduce-scatter boundaries that absorb sibling skew, and pipeline, tensor, and expert parallelism add role-specific dependencies that mix into one global rank index~\cite{li2020pytorchdistributed,pytorchddp,rajbhandari2020zero}. Gradient accumulation hides which microbatch waited, and asynchronous CUDA execution means CPU wall-clock time records when work \emph{becomes host-visible}, not where it launched~\cite{pytorchcuda}. A rank that looks slow in backward may simply be waiting for a sibling whose input pipeline stalled; scalar stage timers alone cannot separate these executions.

\paragraph{Existing tools sit at the two ends of a cost/coverage curve.}
\emph{Throughput dashboards} that aggregate per-stage maxima or averages run continuously but are ambiguous on displaced work: maxima can add an upstream stall and a downstream duplicate wait, averages hide rare rank tails, and a slowest-rank breakdown can blame a victim. \emph{Heavyweight profilers}~\cite{pytorchprofiler,tensorflowprofiler,nsightsystems} and trace analyzers~\cite{holistictraceanalysis} resolve this ambiguity by directly observing waits, idle gaps, and timeline dependencies, but at much higher overhead and storage cost; HPC critical-path and wait-state analyzers~\cite{geimer2008scalasca,boehme2012criticalpath,hu2022dpro,sridharan2023chakra} produce richer attribution still but require event traces or full operator graphs. The missing layer is a \emph{profiler-router}: low enough volume to run continuously, structured enough to tell the operator which rank, stage, and role to investigate first, and labeled clearly enough that downstream automation can decide \emph{when} to invoke the heavier tools.

\paragraph{Frontier accounting closes the gap.}
The key insight is simple: at each stage boundary, track how far the \emph{furthest-along} rank has progressed. The increments of this running maximum form an exact, additive decomposition of the step's exposed time, and each increment points to the stage where group-visible delay first appeared --- with no synchronized clocks and no per-stage \texttt{torch.cuda.synchronize()} (see Section~\ref{sec:method} and Figure~\ref{fig:frontier-accounting}). Where heavyweight profilers observe idle gaps, synchronization events, and dependency edges directly, \sys{} sees only per-rank stage durations in which any wait is mixed into the stage where the host observed it. From how those progress curves cross, it recovers the same decomposition exactly --- cheaply enough to run all the time. Bottleneck-\emph{cause} attribution additionally requires an explicit synchronization-wait exposure model, which holds most cleanly for homogeneous synchronous DDP-style training. Outside that model (MoE, model-parallel roles, heavy overlap), \sys{} emits a downgrade label --- \texttt{co\_critical} (the lead stage may merely run \emph{alongside} the bottleneck rather than cause it), \texttt{role\_aware\_needed} (ranks play different parallelism roles so global aggregation is unsafe), or \texttt{telemetry\_limited} (the signal is incomplete) --- and leaves the cause open.

\paragraph{Contributions.}
\begin{enumerate}
\item A \emph{minimal telemetry contract} for ordered, residual-closed (stage durations sum back to the measured step time), clock-independent distributed stage vectors collected without per-stage CUDA synchronization.
\item \emph{Frontier accounting} with exact telescoping, a slack identity that prevents downstream double-charging, and formal $\min(R,S)$ overcounting / $R$ undercounting bounds against max/average summaries.
\item \emph{Evidence semantics} that distinguish accounting, synchronization-wait attribution, direct exposure, co-criticality, and telemetry limits.
\item A \emph{PyTorch implementation and confirming evaluation}: compact fault routing (injected stage lands in the top-two candidate stages on all 50 test rows; metric defined in Section~\ref{sec:sync-wait}) at sub-percent always-on overhead through 128 ranks on Gloo and NCCL, broad-stage agreement with PyTorch Profiler, HTA (Holistic Trace Analysis), and Nsight on a selected window, and removed-injection, gradient-accumulation, and FSDP/ZeRO-1 spot checks.
\end{enumerate}
Five field cases (Section~\ref{sec:cases}) and a recorded set of failure modes (Section~\ref{sec:failure-modes}) show how the routing signal maps to operator action and where it breaks.

\section{Motivating Example}
\label{sec:motivating}

Figure~\ref{fig:sync-displacement} shows the displacement pattern that motivates the accounting model. Rank $r_0$ reaches its data boundary late; the other two ranks arrive at the backward synchronization point early and observe the same group delay as backward wait. Host-visible scalar timers therefore report a large \emph{backward} time on $r_1$ and $r_2$ even though no rank performed extra backward work --- the slowdown originated in $r_0$'s data stage. On the host-visible durations $r_0=(6.0,1.0,1.2)$, $r_1=(1.0,1.0,6.2)$, $r_2=(1.1,1.0,6.0)$ for (data, fwd, bwd), per-stage maximum totals $6.0+1.0+6.2=13.2$\,s, selecting the upstream data maximum and also a duplicate downstream backward wait; per-stage average smears the duplicates while hiding $r_0$'s tail; the frontier (the running maximum over ranks of cumulative stage time; defined precisely in Section~\ref{sec:method}) charges $6.0$ to data, $1.0$ to forward, $1.2$ to backward, summing to exactly the exposed makespan $8.2$\,s and pointing to the upstream boundary that first exposes it.

\begin{figure}[t]
\centering
\begin{tikzpicture}[x=1.15cm,y=0.92cm,font=\footnotesize,>=Latex]
\tikzset{
  productive/.style={draw=black!55,line width=0.25pt},
  waitspan/.style={draw=black!55,line width=0.25pt,fill=spWait!18,pattern=north east lines,pattern color=black!45},
  markline/.style={black!75,line width=0.55pt,densely dashed},
  callout/.style={->,>=Latex,line width=0.4pt,black!70}
}
% legend (top): 3 stage swatches + wait hatch
\draw[productive,fill=spData!24] (0,3.45) rectangle (0.40,3.70);
\node[anchor=west] at (0.45,3.575) {data};
\draw[productive,fill=spForward!24] (1.20,3.45) rectangle (1.60,3.70);
\node[anchor=west] at (1.65,3.575) {fwd};
\draw[productive,fill=spBackward!24] (2.30,3.45) rectangle (2.70,3.70);
\node[anchor=west] at (2.75,3.575) {bwd};
\draw[waitspan] (3.55,3.45) rectangle (3.95,3.70);
\node[anchor=west] at (4.00,3.575) {host-visible sync wait};
% rank labels
\foreach \y/\rank in {2.4/$r_0$,1.35/$r_1$,0.3/$r_2$} {
  \node[anchor=east] at (-0.2,\y) {\rank};
}
% rank r0 (slow data tail dominates)
\draw[productive,fill=spData!24] (0,2.15) rectangle (6.0,2.65) node[midway] {data tail};
\draw[productive,fill=spForward!24] (6.0,2.15) rectangle (7.0,2.65) node[midway] {fwd};
\draw[productive,fill=spBackward!24] (7.0,2.15) rectangle (8.2,2.65) node[midway] {bwd};
% rank r1
\draw[productive,fill=spData!24] (0,1.10) rectangle (1.0,1.60) node[midway] {data};
\draw[productive,fill=spForward!24] (1.0,1.10) rectangle (2.0,1.60) node[midway] {fwd};
\draw[waitspan] (2.0,1.10) rectangle (7.0,1.60) node[midway,fill=white,inner sep=1pt] {wait in bwd};
\draw[productive,fill=spBackward!24] (7.0,1.10) rectangle (8.2,1.60) node[midway] {bwd};
% rank r2
\draw[productive,fill=spData!24] (0,0.05) rectangle (1.1,0.55) node[midway] {data};
\draw[productive,fill=spForward!24] (1.1,0.05) rectangle (2.1,0.55) node[midway] {fwd};
\draw[waitspan] (2.1,0.05) rectangle (7.0,0.55) node[midway,fill=white,inner sep=1pt] {wait in bwd};
\draw[productive,fill=spBackward!24] (7.0,0.05) rectangle (8.1,0.55) node[midway] {bwd};
% frontier boundaries
\draw[markline] (6.0,-0.55) -- (6.0,2.95);
\node[anchor=north east,align=right,fill=white,inner sep=1pt] at (5.95,-0.55)
  {frontier after data\\$F_{t,1}=6.0$};
\draw[markline] (8.2,-0.55) -- (8.2,2.95);
\node[anchor=north west,align=left,fill=white,inner sep=1pt] at (8.25,-0.55)
  {exposed step time\\$F_{t,S}=8.2$};
% time axis
\draw[->,black!65] (0,-0.25) -- (8.65,-0.25) node[right] {rank-local elapsed time};
% callout
\node[anchor=west,text=black!80,align=left] at (8.55,2.40)
  {naive scalar timer:\\charges $\approx\!5$\,s wait\\to \emph{bwd} on $r_1,r_2$};
\draw[callout] (8.50,2.30) to[bend left=8] (4.6,1.62);
% attribution comparison strip
\node[anchor=west,align=left] at (-0.4,-2.10)
  {\textbf{per-stage max}: $6.0+1.0+6.2=\mathbf{13.2}\,$s\quad\emph{(upstream data max + downstream wait duplicate)}};
\node[anchor=west,align=left] at (-0.4,-2.75)
  {\textbf{frontier}: $a_{\mathrm{data}}+a_{\mathrm{fwd}}+a_{\mathrm{bwd}}=6.0+1.0+1.2=\mathbf{8.2}\,$s\quad\emph{(matches exposed makespan)}};
\end{tikzpicture}
\caption{Synchronization displacement in one logical step. Per-rank scalar timers duplicate the same 5\,s group wait across the waiting ranks' backward spans. A per-stage maximum then selects one downstream backward duplicate and adds it to the upstream data maximum, inflating the attributed total to $13.2$\,s; the frontier identity charges the first exposed advance to the upstream data boundary and recovers the actual exposed makespan, $8.2$\,s. Stage abbreviations \emph{data}, \emph{fwd}, \emph{bwd} denote the data-loading, forward-pass, and backward-pass stages (formal names in Section~\ref{sec:method}).}
\label{fig:sync-displacement}
\end{figure}

The production setting drives every design choice:
\begin{itemize}
  \item \emph{Continuous measurement} with no per-step barriers and no explicit CUDA synchronization on the hot path.
  \item \emph{Stable broad stage semantics} comparable across models, with deeper probes as side evidence only.
  \item \emph{Rank-safe aggregation}: any monitoring collective is opt-in and may fail without failing training.
  \item \emph{Prefetch-aware alignment} that charges a data wait to the step that consumes the batch rather than the loop iteration that called \texttt{next}.
  \item \emph{Evidence-scoped output} that separates accounting, model-scoped attribution, and telemetry quality so automation does not add unsupported assumptions.
\end{itemize}
Stage taxonomy and ordered-stage contract are in Appendix~\ref{app:taxonomy}. We use three terms precisely: \emph{low-volume} is a structural property of the telemetry, \emph{low-overhead} is a measured property of an implementation, and \emph{always-on} additionally requires bounded queues, symmetric failure-safe collection, and conservative downgrades.

\section{Frontier Accounting}
\label{sec:method}

\sys{} records a fixed list of non-overlapping trainer-stage spans per rank using CPU wall-clock time. The six default stages are \texttt{data.next\_wait}, \texttt{model.fwd\_loss\_cpu\_wall}, \texttt{model.backward\_cpu\_wall}, \texttt{callbacks.cpu\_wall}, \texttt{optim.step\_cpu\_wall}, and a residual \texttt{step.other\_cpu\_wall} that absorbs closure error (Appendix~\ref{app:taxonomy}). For step $t$, rank $r$, and ordered frontier stage $s$, let $d_{t,r,s}\ge 0$ be the measured duration. Define the rank-local prefix and the max-prefix frontier:
\begin{equation}
  P_{t,r,s}=\sum_{j\le s}d_{t,r,j},\qquad
  F_{t,s}=\max_{r\in\ranks}P_{t,r,s},\qquad
  a_{t,s}=F_{t,s}-F_{t,s-1}.
\end{equation}
Set $F_{t,0}=0$, so $a_{t,1}=F_{t,1}$. Because rank-local prefixes are nondecreasing, the frontier is nondecreasing and $a_{t,s}\ge 0$. For a window $\steps$ of $N=\lvert\steps\rvert$ steps, the reported stage share is step-time weighted:
\begin{equation}
  A_s = \frac{\sum_{t\in\steps} a_{t,s}}{\sum_{t\in\steps} F_{t,S}}.
\end{equation}
The accounted quantity is the maximum rank-local exposed duration for an aligned logical step, $F_{t,S}$, not a wall-clock span from earliest host start to latest finish; the method therefore needs only step-index agreement and rank-local monotonic durations, not synchronized host clocks. Cross-step prefetch and run-ahead break that alignment; \sys{} then either reassigns \texttt{data.next\_wait} to the consuming step (Appendix~\ref{app:taxonomy}) or, when reassignment is unsafe, emits a telemetry-quality downgrade rather than an attribution. Figure~\ref{fig:frontier-accounting} illustrates the construction where a different rank bounds the frontier at each boundary.

\begin{figure}[!ht]
\centering
\begin{tikzpicture}[x=1.9cm,y=0.57cm,font=\footnotesize,>={Latex[length=4.5pt,width=3.5pt]}]
\tikzset{
  frontier/.style={spCallback!90!black,line width=1.4pt,opacity=0.65},
  r0style/.style={line width=0.55pt,black!88,solid},
  r1style/.style={line width=0.55pt,black!78,densely dashed},
  r2style/.style={line width=0.55pt,black!68,densely dotted},
  advance/.style={<->,draw=spCallback!85!black,line width=0.7pt},
  advanceguide/.style={densely dotted,black!35,line width=0.3pt},
  r0mark/.style={fill=black!90,inner sep=0pt,minimum size=3.2pt,circle},
  r1mark/.style={fill=black!80,inner sep=0pt,minimum size=3.0pt,rectangle},
  r2mark/.style={fill=black!70,inner sep=0pt,minimum size=4.2pt,regular polygon,regular polygon sides=3}
}
\draw[->,black!65] (0,0) -- (3.55,0);
\draw[->,black!65] (0,0) -- (0,9.8) node[above,font=\scriptsize,black!65] {prefix time (s)};
\foreach \y in {2,4,6,8} {
  \draw[black!40,line width=0.3pt] (-0.04,\y) -- (0,\y);
  \node[anchor=east,font=\scriptsize,text=black!65] at (-0.06,\y) {\y};
}
\foreach \x in {1,2,3} {
  \draw[black!18] (\x,0) -- (\x,9.0);
}
\node[anchor=north,text=spData!55!black] at (0.5,-0.10) {data};
\node[anchor=north,text=spForward!45!black] at (1.5,-0.10) {fwd};
\node[anchor=north,text=spBackward!55!black] at (2.5,-0.10) {bwd};
\node[r0mark] at (0.10,9.40) {}; \draw[r0style] (0.18,9.40) -- (0.42,9.40);
\node[anchor=west,font=\scriptsize] at (0.44,9.40) {$\,r_0$};
\node[r1mark] at (1.20,9.40) {}; \draw[r1style] (1.28,9.40) -- (1.52,9.40);
\node[anchor=west,font=\scriptsize] at (1.54,9.40) {$\,r_1$};
\node[r2mark] at (2.30,9.40) {}; \draw[r2style] (2.38,9.40) -- (2.62,9.40);
\node[anchor=west,font=\scriptsize] at (2.64,9.40) {$\,r_2$};
\draw[frontier] (3.40,9.40) -- (3.70,9.40);
\node[anchor=west,font=\scriptsize] at (3.72,9.40) {$\,$frontier $F_{t,s}$};
\draw[r0style] (0,0) -- (1,4) -- (2,5) -- (3,8);
\draw[r1style] (0,0) -- (1,1) -- (2,6) -- (3,8);
\draw[r2style] (0,0) -- (1,1) -- (2,3) -- (3,8.5);
\draw[frontier] (0,0) -- (1,4) -- (2,6) -- (3,8.5);
\draw[spCallback!90!black,line width=1.0pt] (1,4)   circle (2.7pt);
\draw[spCallback!90!black,line width=1.0pt] (2,6)   circle (2.7pt);
\draw[spCallback!90!black,line width=1.0pt] (3,8.5) circle (2.7pt);
\node[r0mark] at (1,4) {}; \node[r0mark] at (2,5) {}; \node[r0mark] at (3,8) {};
\node[r1mark] at (1,1) {}; \node[r1mark] at (2,6) {}; \node[r1mark] at (3,8) {};
\node[r2mark] at (1,1) {}; \node[r2mark] at (2,3) {}; \node[r2mark] at (3,8.5) {};
\foreach \y in {0.15,4,6,8.5} {
  \draw[advanceguide] (3.0,\y) -- (3.45,\y);
}
\draw[advance] (3.45,0.15) -- (3.45,4);   \node[anchor=west,text=spCallback!85!black] at (3.55,2.1)  {$a_{\mathrm{data}}=4.0\,$s (from $r_0$)};
\draw[advance] (3.45,4)    -- (3.45,6);   \node[anchor=west,text=spCallback!85!black] at (3.55,5.0)  {$a_{\mathrm{fwd}}=2.0\,$s (from $r_1$)};
\draw[advance] (3.45,6)    -- (3.45,8.5); \node[anchor=west,text=spCallback!85!black] at (3.55,7.25) {$a_{\mathrm{bwd}}=2.5\,$s (from $r_2$)};
\node[anchor=west,text=spCallback!85!black] at (3.55,1.00) {$\sum_s a_s = F_{t,S} = \mathbf{8.5}\,$s};
\end{tikzpicture}
\caption{Max-prefix frontier accounting on a three-rank scenario where a different rank bounds the frontier at each boundary. Each rank's prefix curve $P_{t,r,s}$ is drawn at every boundary; the frontier $F_{t,s}=\max_r P_{t,r,s}$ (purple) takes the latest rank: $r_0$ at data, $r_1$ at fwd, $r_2$ at bwd. The vertical advances form the additive decomposition $F_{t,S}=4.0+2.0+2.5=8.5$\,s, attributing each increment exactly once to the rank attaining the frontier. This is an accounting identity; reading it as a cause requires the model of Section~\ref{sec:sync-wait}.}
\label{fig:frontier-accounting}
\end{figure}

The accounting core is a single pass over steps, ranks, and stages:
\begin{verbatim}
for t in window:
    F_prev = 0
    for s in ordered_stages:
        F = max over r of prefix_sum(d[t,r,1:s])
        a[t,s] = F - F_prev; F_prev = F
\end{verbatim}
This performs $O(RNS)$ arithmetic per window and streams one step at a time in $O(RS)$ memory once that step's rank-stage matrix is available. Edge cases are explicit: with $R=1$ the frontier reduces to the rank's ordered vector and gives no cross-rank evidence; below a small window-denominator floor the implementation emits raw advances rather than percentages; missing stages, mismatched schemas, or mixed world sizes close the current window.

\begin{theorem}[Additive exposed-makespan accounting]
\label{thm:additive}
For any aligned nonnegative stage vectors, $\sum_{s=1}^{S} a_{t,s}=F_{t,S}$.
\end{theorem}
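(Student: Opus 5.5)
The identity is a telescoping sum, so the plan is to substitute the definition of the advances and collapse it. By definition $a_{t,s}=F_{t,s}-F_{t,s-1}$ for $s=1,\dots,S$, with the convention $F_{t,0}=0$. Summing over $s$, every intermediate frontier value $F_{t,s}$ with $1\le s\le S-1$ appears once with a plus sign (from $a_{t,s}$) and once with a minus sign (from $a_{t,s+1}$). The sum therefore reduces to
\[
\sum_{s=1}^{S} a_{t,s} = F_{t,S}-F_{t,0} = F_{t,S}.
\]
I would present this as a one-line induction on $S$, or simply as the standard telescoping identity. The inductive step is $\sum_{s\le k+1}a_{t,s}=F_{t,k}+(F_{t,k+1}-F_{t,k})$.

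Before telescoping, I would check that every term is well defined, because this is where the hypotheses enter. Alignment guarantees that for fixed $t$ every rank $r\in\ranks$ reports a duration $d_{t,r,s}$ for each ordered stage $s\le S$. Hence each prefix $P_{t,r,s}$ is a finite sum, and each frontier $F_{t,s}=\max_{r}P_{t,r,s}$ is a maximum over the finite nonempty set $\ranks$, so it is finite. The base value $F_{t,0}=0$ also agrees with $\max_r P_{t,r,0}$, since the empty prefix is $0$.

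Nonnegativity is not actually needed for the identity itself. It is used, as noted after the definitions, to get $a_{t,s}\ge 0$, so that the decomposition is a genuine partition of $F_{t,S}$ into nonnegative parts. I would remark on this explicitly.

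There is no real obstacle here. The only point needing care is the interpretive one: $F_{t,S}$ equals $\max_r\sum_{s}d_{t,r,s}$, the largest rank-local step total, and not a wall-clock span across hosts. This follows directly from the definition at $s=S$ and is why no clock synchronization is required. I would close by noting that summing the identity over $t\in\steps$ gives $\sum_s A_s=1$ whenever the window denominator is positive.
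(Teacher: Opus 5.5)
Your proposal is correct and takes the same route as the paper, whose proof is exactly the one-line telescoping $\sum_s(F_{t,s}-F_{t,s-1})=F_{t,S}-F_{t,0}=F_{t,S}$. Your remarks are also accurate: nonnegativity is needed only for $a_{t,s}\ge 0$, not for the identity itself, and summing over the window gives $\sum_s A_s=1$ whenever the window denominator is positive.
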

\begin{proof}
Telescoping: $\sum_s(F_{t,s}-F_{t,s-1})=F_{t,S}-F_{t,0}=F_{t,S}$.
\end{proof}

\paragraph{Slack identity.}
Define rank-$r$ slack at boundary $s$ as $\lambda_{t,r,s}=F_{t,s-1}-P_{t,r,s-1}\ge 0$. Then
\begin{equation}
  a_{t,s}=\max_r\bigl(d_{t,r,s}-\lambda_{t,r,s}\bigr),
\end{equation}
so a rank that arrives early at $s-1$ has its current-stage duration discounted by exactly the prior slack it owes the group. This is why a slow data step that forces others to wait is charged once to the data boundary, not again to backward on the ranks that waited.

\paragraph{Comparison with per-stage max and average.}
Let $M_t=\sum_s\max_r d_{t,r,s}$ and $\overline M_t=\sum_s R^{-1}\sum_r d_{t,r,s}$.
\begin{proposition}[Per-stage max overcounting]
\label{prop:max-overcount}
$F_{t,S}\le M_t\le \min(R,S)\,F_{t,S}$; the upper bound is tight.
\end{proposition}
\begin{proposition}[Per-stage average undercounting]
\label{prop:avg-underreport}
$R^{-1}\,F_{t,S}\le \overline M_t\le F_{t,S}$; the lower bound is tight.
\end{proposition}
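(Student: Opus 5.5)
We need to prove Proposition~\ref{prop:avg-underreport}: $R^{-1}F_{t,S}\le \overline M_t\le F_{t,S}$, with the lower bound tight. The plan is to rewrite $\overline M_t$ as the average over ranks of total rank-local step time. Then both inequalities follow by comparing an average with a maximum and with a single term.

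\emph{Step 1 (rewrite).} Exchange the finite sums:
\[
\overline M_t=\sum_s R^{-1}\sum_r d_{t,r,s}=R^{-1}\sum_r P_{t,r,S}.
\]
Here $P_{t,r,S}$ is the full prefix of rank $r$. By definition $F_{t,S}=\max_r P_{t,r,S}$, so $\overline M_t$ is the mean of the values whose maximum is $F_{t,S}$.

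\emph{Step 2 (upper bound).} Each $P_{t,r,S}\le F_{t,S}$. Averaging $R$ such terms gives $\overline M_t\le F_{t,S}$.

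\emph{Step 3 (lower bound).} Let $r^\star$ attain the maximum, so $P_{t,r^\star,S}=F_{t,S}$. Nonnegativity of the $d_{t,r,s}$ gives $P_{t,r,S}\ge 0$ for every $r$. Dropping all other ranks from the sum therefore yields
\[
\sum_r P_{t,r,S}\ge P_{t,r^\star,S}=F_{t,S},
\]
and dividing by $R$ gives $\overline M_t\ge R^{-1}F_{t,S}$. Nonnegativity is exactly what the aligned-stage-vector contract of Section~\ref{sec:method} supplies, so this step uses nothing beyond the stated hypotheses.

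\emph{Step 4 (tightness).} Take one rank with $d_{t,r^\star,s}=c_s\ge 0$, not all zero, and every other rank with all durations equal to $0$. Then $F_{t,S}=\sum_s c_s$ and $\overline M_t=R^{-1}\sum_s c_s$, so equality holds in the lower bound for every $R$ and $S$. This is the "rare rank tail buried in an average" scenario described in the introduction.

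A zero-duration rank may look unphysical in a synchronized setting. If one wants positive durations instead, take $d=\varepsilon$ on the other ranks: the ratio $\overline M_t/F_{t,S}$ then tends to $R^{-1}$ as $\varepsilon\to 0$, which shows the constant cannot be improved. The upper bound is also attained, for example when all ranks have identical vectors, but the statement only claims tightness of the lower bound.

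There is no genuine obstacle here. The only point needing care is Step 1's observation that the sum over stages of per-stage averages equals the average of per-rank totals. That holds because averaging is linear, whereas the per-stage max in Proposition~\ref{prop:max-overcount} does not commute with summation in the same way.
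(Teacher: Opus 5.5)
Your proof is correct and follows the paper's argument essentially step for step. The paper also reads $\overline M_t$ as the average rank total, bounds it above because every rank total is at most $F_{t,S}$, bounds it below using ``max $\le$ sum'' (which is where nonnegativity enters), and attains the lower bound with one rank of total $D$ and all others zero. Your explicit exchange of sums in Step 1 simply writes out the rewrite the paper leaves implicit.
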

Per-stage maxima can charge the same exposed second up to $\min(R,S)$ times; per-stage averages can hide rank tails by a factor of $R$. Proofs and a companion measurement-error stability result (errors of order $s\epsilon$ on prefixes and $(2s-1)\epsilon$ on advances) are in Appendix~\ref{app:frontier-bounds-proofs}. For accumulation factor $m$, the ordered list is expanded by accumulation index before the frontier is taken, and semantic reporting groups are aggregated only afterward so repeated microsteps are not collapsed prematurely; changed factors or sync patterns close the window.

\section{Synchronization-Wait Model and Evidence}
\label{sec:labels}
\label{sec:sync-wait}

The frontier is always an accounting signal; bottleneck routing requires an explicit synchronization-wait exposure model, since the same scalar matrix can represent sync wait or independent overlapped work. Write $d_{t,r,s}=x_{t,r,s}+q_{t,r,s}$, with $x$ productive rank-local work and $q$ host-visible waiting charged to the stage in which the host observes it (a kernel launched in forward but observed in backward is recorded in backward; a collective wait is recorded in its enclosing trainer stage). Profilers and wait-state analyzers observe parts of $q$ directly --- as idle, synchronization, or communication events with dependency edges --- and attribute them. \sys{} instead treats $q$ as latent, reasoning only from the ordered progress matrix $d$ in which $q$ is lumped with productive work $x$. It recovers the part of $q$ that is \emph{exposed} in the makespan without ever separating $q$ from $x$ or observing it as a typed event.

A wait segment $q_{t,r,s}$ has a dependency edge to another rank's completion of an earlier ordered boundary and shrinks only when that predecessor prefix advances. Under the assumptions of Table~\ref{tab:sync-assumptions} --- residual closure, aligned steps, common boundary semantics, monotonic timing, wait-dependency charging, homogeneous productive work (or role-aware grouping), and a sufficient dominance margin --- the slack identity implies that the only newly exposed group time at a boundary is the amount by which some rank's prefix exceeds the previous frontier, so frontier advances identify the earliest measured boundary at which delay becomes exposed. The table also gives the evidence label emitted when an assumption is doubtful.

\begin{table}[t]
\centering
\caption{Assumptions of the synchronization-wait exposure model and the evidence label emitted when one is doubtful (deterministic gates in Appendix~\ref{app:labeler-gates}).}
\label{tab:sync-assumptions}
\small
\begin{tabularx}{\linewidth}{@{}YL{0.36\linewidth}@{}}
\toprule
Assumption & If doubtful \\
\midrule
Residual closure and aligned logical steps & \texttt{telemetry\_limited} \\
Common ordered boundary semantics & \texttt{role\_aware\_needed} \\
Rank-local monotonic timing & \texttt{telemetry\_limited} \\
Wait-dependency charging & \texttt{co\_critical} \\
Homogeneous productive work or role-aware grouping & \texttt{role\_aware\_needed} \\
Sufficient dominance margin & \texttt{co\_critical} \\
\bottomrule
\end{tabularx}
\end{table}

The causal reading --- that this time was \emph{caused} by an upstream straggler rather than independent overlapped work --- follows from wait-dependency charging plus rank comparability; when either is questionable the labeler emits \texttt{co\_critical} or \texttt{role\_aware\_needed}. As a sharp case, the two-rank matrix $r_0=(10,0)$, $r_1=(0,10)$ gives $F_{\mathrm{data}}=F_{\mathrm{bwd}}=10$ and charges $10$ to data and $0$ to backward, but is equally consistent with $r_1$ waiting on $r_0$'s data (data causal) or with two independent co-critical paths; \sys{} emits the model-scoped attribution only when workload semantics support it, and otherwise carries a \texttt{co\_critical} ambiguity set such as \{\texttt{data}, \texttt{backward}\}, reported as the machine-readable \texttt{co\_critical\_stages} field.

A complementary direct-exposure score replaces stage $s$ with a clipped baseline and recomputes the frontier. Let $\tilde b_{t,r,s}$ be a candidate baseline for stage $s$ --- a per-rank window median, cohort median, or no-stall reference --- clipped to $b_{t,r,s}=\min(d_{t,r,s},\tilde b_{t,r,s})$ so the replacement never exceeds the observation:
\begin{equation}
  G_s(b)=\frac{\sum_t \bigl(F_{t,S}-F^{(s\leftarrow b)}_{t,S}\bigr)}{\sum_t F_{t,S}}\ge 0.
\end{equation}
For a feasible baseline whose stage-$s$ reduction also removes the downstream wait it induces, $G_s$ lower-bounds the model-scoped gain; otherwise it is a conservative sensitivity score, not an intervention estimate, since the recomputation leaves any non-removable downstream wait in place. High $A_s$ with high $G_s$ supports \texttt{direct\_exposure}; high $A_s$ with low $G_s$ supports \texttt{sync\_wait\_dependent} when workload semantics agree and \texttt{co\_critical} otherwise.

For localization the labeler reports the latest-rank tie set, the lag $L_{t,s}=\max_r P_{t,r,s}-\median_r P_{t,r,s}$ and its increment, and the max-minus-secondmax gap, counting switches only between confident unique leaders. Labels (Table~\ref{tab:labels-core}) describe orthogonal evidence axes, not a flat confidence ladder; the full label set and deterministic downgrade gates are in Appendices~\ref{app:full-labels}--\ref{app:labeler-gates}. The \emph{routing candidate set} $C_{\mathrm{route}}$ is the smallest leading-share prefix whose cumulative share reaches $\tau_C$ (default $0.80$); the evaluation reports \emph{top-2} (seeded stage among the two highest shares) and \emph{candidate hit} (anywhere in the prefix), always paired with candidate-set size. The routing set is kept separate from the ambiguity set (\texttt{co\_critical}), so a row can route to a compact set while still flagging the stages that remain jointly plausible.

\begin{table}[htbp]
\centering
\caption{Core diagnosis labels emitted in the main pipeline.}
\label{tab:labels-core}
\small
\begin{tabularx}{\linewidth}{@{}L{0.27\linewidth}Y@{}}
\toprule
\textbf{Label} & \textbf{Meaning} \\
\midrule
\texttt{frontier\_accounting} & Additive exposed-makespan decomposition; base claim. \\
\texttt{direct\_exposure} & Raw duration, spread, and clipped static gain agree with the frontier stage. \\
\texttt{sync\_wait\_dependent} & Frontier share high but static gain low; actionability depends on the wait model. \\
\texttt{co\_critical} & Multiple stages or ranks can plausibly remain bottlenecks after optimizing one stage. \\
\texttt{role\_aware\_needed} & Rank roles differ; global rank aggregation is unsafe. \\
\texttt{telemetry\_limited} & Residuals, gather failures, or missing probes cap confidence. \\
\bottomrule
\end{tabularx}
\end{table}

\section{PyTorch Implementation}
\label{sec:impl}

The minimal integration wraps the logical optimizer step with one step context and ordered, non-overlapping stage contexts:
\begin{verbatim}
with perf.step():
    with perf.stage("data.next_wait"): batch = next(data_iter)
    with perf.stage("model.fwd_loss_cpu_wall"):
        with perf.cuda_event_sample("model.fwd_loss_cuda_event_ms"):
            loss = forward_loss(batch)
    with perf.stage("model.backward_cpu_wall"): loss.backward()
    with perf.stage("callbacks.cpu_wall"): run_callbacks()
    with perf.stage("optim.step_cpu_wall"): optimizer.step()
\end{verbatim}
For DDP --- and for the PyTorch \texttt{ZeroRedundancyOptimizer} spot check (ZeRO-1-style optimizer-state sharding, distinct from DeepSpeed ZeRO) --- this placement includes reducer activity and exposed collective waits in the backward stage; for FSDP \texttt{FULL\_SHARD}, all-gather, reduce-scatter, and parameter materialization are charged to the broad forward, backward, or optimizer region in which the host observes them.

\sys{} avoids \texttt{torch.cuda.synchronize()} on the hot path so monitoring does not perturb overlap; the price is that stage timing follows host-visible exposure, which is exactly what the frontier identity decomposes, while kernel-level attribution remains the profiler's job. An optional CUDA-event forward channel, sampled at deterministic fraction $q\in\{0,0.05,1\}$, records two timing events around the forward/loss region and polls readiness at later safe points~\cite{pytorchcudaevent}; the event value is side evidence only, never enters the prefix vector, and can support a \texttt{forward\_device\_supported} label or suggest host overhead when CPU-wall forward is high but event time is low. A window-boundary collective gathers the per-rank $[N,S]$ buffer to rank 0 over a separate Gloo or NCCL telemetry process group; a failed or timed-out gather records \texttt{gather\_ok=false}, emits any safe local summary, and downgrades distributed labels to \texttt{telemetry\_limited}, never failing training. For rank count $R$, $N$ steps, $K$ ordered stage fields, and $b$ bytes, the dense root-visible payload is $B_{\mathrm{root}}=RNKb$ ($\approx 0.61$ MB at $R{=}128,N{=}100,K{=}6,b{=}8$; the 32-rank E9 evidence packets of Section~\ref{sec:eval-profiler-comparison} measure $0.11$ MB); fractional overhead is the gather-path time divided by training time in the window, so longer windows amortize it.

\section{Evaluation}
\label{sec:eval}

We evaluate \sys{} with the \texttt{stagefrontier-artifact} repository. Cluster experiments use bf16 transformer training in the NVIDIA PyTorch 24.12 container; DDP is the main validation path, Gloo telemetry gather is the main backend, and the overhead matrix additionally evaluates NCCL. The main E0--E5 routing/overhead matrix disables gradient accumulation; E6 is a removed-injection A/B/A check, E7 a fixed-factor gradient-accumulation spot check, E8 a scope-extension spot check for PyTorch FSDP \texttt{FULL\_SHARD} and \texttt{ZeroRedundancyOptimizer} ZeRO-1, and E9 a 32-rank selected-window comparison against PyTorch Profiler, HTA, and Nsight Systems. Faults are host-side delays injected into the CPU-wall span of one rank (12--360\,ms depending on scenario); most rows surface the delay as host-visible time, and only the synchronization-bearing rows additionally place a \texttt{torch.distributed.barrier()} in the affected span so the delay is exposed rather than absorbed. Windows span 120--600 measured steps after 20 warmup steps, the fractional overhead $\rho$ divides gather-path time by training time in the window, and the paired bootstrap resamples whole run/window blocks rather than individual steps. Experiment groups are defined in Appendix~\ref{app:artifact-checklist} (Table~\ref{tab:experiment-groups}); per-group model and hardware configuration, exact step counts, and bootstrap scripts are recorded in the artifact's structured rows and Slurm wrappers. Table~\ref{tab:rq-evidence-gap} maps each research question to its supporting evidence.

The evaluation addresses five research questions:
\begin{description}[leftmargin=2.2em,labelwidth=2.0em,topsep=3pt,itemsep=1pt]
  \item[RQ1] Does \sys{} handle synchronization displacement better than existing rank-local stage summaries?
  \item[RQ2] Can it route bottlenecks to a compact candidate stage and rank \emph{without} a full profiler?
  \item[RQ3] Does it remain valid under modern distributed training regimes such as gradient accumulation and FSDP/ZeRO sharding?
  \item[RQ4] Is the overhead low enough for always-on deployment?
  \item[RQ5] Does the routing signal translate to operational value on real workloads?
\end{description}

\begin{table}[t]
\centering
\caption{Evidence and supporting sections for each research question (RQ1--RQ5).}
\label{tab:rq-evidence-gap}
\small
\begin{tabularx}{\linewidth}{@{}L{0.05\linewidth}YL{0.18\linewidth}@{}}
\toprule
\textbf{RQ} & \textbf{Evidence} & \textbf{Section} \\
\midrule
RQ1 & Telescoping identity at floating-point roundoff; max/avg bounds; 100\% sync-wait fixture recovery vs.\ 0\% for max/average. & \ref{sec:eval-validation}, \ref{sec:eval-routing} \\
RQ2 & Five fault classes 50/50 top-2 (40/50 top-1) at 8/32 ranks plus 64/128-rank spot checks; event channel separates forward-device vs.\ host; 32-rank selected windows agree with PyTorch Profiler / HTA / Nsight on 12/12 positive rows under a shared reducer. & \ref{sec:eval-routing}, \ref{sec:eval-profiler-comparison} \\
RQ3 & Fixed-factor grad-accum spot check passes; FSDP \texttt{FULL\_SHARD} and ZeRO-1 route 90/90 sync-bounded rows top-2, 87/90 top-1. & \ref{sec:scoped-checks} \\
RQ4 & 95\% CI upper bound $\le 0.181\%$ CPU-wall and $\le 0.043\%$ event-channel through 128 ranks on Gloo and NCCL. & \ref{sec:overhead} \\
RQ5 & Five field deployments routed investigators to dataloader, callback, post-op, and QAT actions; two corroborated by a profiler trace and an op-level ablation. & \ref{sec:cases} \\
\bottomrule
\end{tabularx}
\end{table}

\subsection{Algorithmic validation (RQ1)}
\label{sec:eval-validation}
The artifact CPU validation suite covers properties independent of the cluster campaign. The telescoping identity holds to floating-point roundoff ($8.88\times 10^{-16}$ max error); the formal max/average bounds of Propositions~\ref{prop:max-overcount}--\ref{prop:avg-underreport} are satisfied on random and tight fixtures (0 violations); measurement-error stability holds (observed/bound $\le 0.9998$). On the sync-wait fixture ($n=120$), \sys{} recovers the upstream boundary in 100\% of rows while per-stage max and average recover it in 0\%; direct-exposure recovery is 100\% ($n=240$); and four downgrade fixtures (co-critical, role-heterogeneous, telemetry-limited, two-stage tied) all trigger their expected labels.

\subsection{Hidden-rank candidate routing and baselines (RQ1/RQ2)}
\label{sec:eval-routing}
E3 injects 120 ms delays into one rank in five hidden-rank scenarios (data, backward, backward/comm, forward/device, forward/host) at 8 and 32 ranks (5 seeds each), plus a callback-sync pilot. Figure~\ref{fig:eval-results} summarizes the two headline results --- compact-routing counts against baselines and the data-tail detectability transition --- and Table~\ref{tab:routing-and-baselines} gives the exact counts.

\begin{figure}[t]
\centering
\begin{tikzpicture}[font=\scriptsize,>=Latex]
% ==================== Panel (a): Routing comparison ====================
\begin{scope}[xshift=0cm,yshift=0cm]
\tikzset{
  bartop1/.style={fill=spData!78,draw=spData!55!black,line width=0.25pt},
  bartop2/.style={fill=spForward!62,draw=spForward!50!black,line width=0.25pt},
  barcand/.style={fill=spWait!30,draw=spWait!70!black,line width=0.35pt,pattern=north east lines,pattern color=spWait!80!black}
}
\draw[->,black!65] (0,0) -- (0,5.70);
\node[anchor=south,font=\scriptsize,rotate=90,text=black!75] at (-0.62,2.5) {count out of 50};
\draw[black!65] (0,0) -- (6.0,0);
\foreach \y in {0,10,20,30,40,50} {
  \pgfmathsetmacro{\yp}{\y*0.085}
  \draw[black!30,line width=0.25pt,dotted] (0,\yp) -- (5.95,\yp);
  \draw[black!50,line width=0.3pt] (-0.05,\yp) -- (0,\yp);
  \node[anchor=east,font=\tiny,text=black!70] at (-0.06,\yp) {\y};
}
\foreach \i/\name/\a/\b/\c in {%
  0/SF/40/50/50,%
  1/max/22/50/50,%
  2/avg/20/40/40,%
  3/spread/40/40/50,%
  4/slow/20/40/46,%
  5/r0/20/40/40%
} {
  \pgfmathsetmacro{\gx}{0.50 + \i*0.92}
  \pgfmathsetmacro{\ya}{\a*0.085}
  \pgfmathsetmacro{\yb}{\b*0.085}
  \pgfmathsetmacro{\yc}{\c*0.085}
  \draw[bartop1] ({\gx-0.30},0) rectangle ({\gx-0.12},\ya);
  \draw[bartop2] ({\gx-0.10},0) rectangle ({\gx+0.08},\yb);
  \draw[barcand] ({\gx+0.10},0) rectangle ({\gx+0.28},\yc);
}
\node[anchor=north,font=\tiny\bfseries,text=spData!55!black] at ({0.50+0*0.92},-0.06) {SF};
\foreach \i/\name in {1/max,2/avg,3/spread,4/slow,5/r0} {
  \node[anchor=north,font=\tiny] at ({0.50+\i*0.92},-0.06) {\name};
}
\node[anchor=south,font=\tiny,text=spData!50!black] at ({0.50},{50*0.085+0.05}) {\textbf{40}\,/\,50\,/\,50};
\draw[bartop1] (0.10,5.36) rectangle (0.30,5.58);
\node[anchor=west,font=\tiny] at (0.32,5.47) {top-1};
\draw[bartop2] (1.20,5.36) rectangle (1.40,5.58);
\node[anchor=west,font=\tiny] at (1.42,5.47) {top-2};
\draw[barcand] (2.30,5.36) rectangle (2.50,5.58);
\node[anchor=west,font=\tiny] at (2.52,5.47) {candidate hit};
\node[anchor=north,font=\scriptsize] at (3.0,-0.55) {(a) Routing counts by scoring rule ($n=50$; SF $=$ \sys{})};
\end{scope}
% ==================== Panel (b): Detectability curve ====================
\begin{scope}[xshift=8.0cm,yshift=0cm]
\def\sxb{0.044}
\def\syb{9.6}
\draw[->,black!65] (0,0) -- (0,5.40);
\node[anchor=south,font=\scriptsize,rotate=90,text=black!75] at (-0.72,2.5) {data.next\_wait frontier share};
\draw[->,black!65] (0,0) -- (5.95,0);
\node[anchor=north,font=\scriptsize,text=black!75] at (3.0,-0.55) {(b) Detectability vs.\ injected data-tail delay (ms)};
\foreach \y in {0.0,0.1,0.2,0.3,0.4,0.5} {
  \pgfmathsetmacro{\yp}{\y*\syb}
  \draw[black!35,line width=0.25pt,dotted] (0,\yp) -- (5.6,\yp);
  \draw[black!50,line width=0.3pt] (-0.05,\yp) -- (0,\yp);
  \node[anchor=east,font=\tiny,text=black!70] at (-0.06,\yp) {\y};
}
\foreach \x in {12,30,60,120} {
  \pgfmathsetmacro{\xp}{\x*\sxb}
  \draw[black!50,line width=0.3pt] (\xp,-0.05) -- (\xp,0);
  \node[anchor=north,font=\tiny,text=black!70] at (\xp,-0.07) {\x};
}
\draw[densely dotted,spWait!70!black,line width=0.45pt] (0,0.43*\syb) -- (5.6,0.43*\syb);
\node[anchor=west,font=\tiny,text=spWait!55!black] at (0.10,0.43*\syb+0.14) {detectability cutoff ($\approx 0.43$)};
\draw[line width=1.05pt,spData!85!black]
  (12*\sxb,0.060*\syb) -- (30*\sxb,0.146*\syb) -- (60*\sxb,0.284*\syb) -- (120*\sxb,0.434*\syb);
\foreach \pt in {(12*\sxb,0.060*\syb),(30*\sxb,0.146*\syb),(60*\sxb,0.284*\syb),(120*\sxb,0.434*\syb)} {
  \fill[spData!85!black] \pt circle (1.5pt);
}
\draw[line width=1.05pt,spCallback!85!black,densely dashed]
  (12*\sxb,0.053*\syb) -- (30*\sxb,0.129*\syb) -- (60*\sxb,0.255*\syb) -- (120*\sxb,0.417*\syb);
\foreach \pt in {(12*\sxb,0.053*\syb),(30*\sxb,0.129*\syb),(60*\sxb,0.255*\syb),(120*\sxb,0.417*\syb)} {
  \draw[draw=spCallback!85!black,fill=white,line width=0.45pt] \pt circle (1.4pt);
}
\draw[line width=1.05pt,spData!85!black] (0.30,5.10) -- (0.70,5.10);
\fill[spData!85!black] (0.50,5.10) circle (1.5pt);
\node[anchor=west,font=\tiny] at (0.74,5.10) {8 ranks};
\draw[line width=1.05pt,spCallback!85!black,densely dashed] (2.05,5.10) -- (2.45,5.10);
\draw[draw=spCallback!85!black,fill=white,line width=0.45pt] (2.25,5.10) circle (1.4pt);
\node[anchor=west,font=\tiny] at (2.49,5.10) {32 ranks};
\end{scope}
\end{tikzpicture}
\caption{Headline empirical results for the hidden-rank routing matrix. \textbf{(a)} Routing counts across five E3 injection scenarios ($n{=}50$); candidate-hit counts should be read with the candidate-set sizes in Table~\ref{tab:routing-and-baselines}. \textbf{(b)} Mean \texttt{data.next\_wait} frontier share vs.\ injected data-tail delay shows a detectability transition: the dotted line marks the empirical single-stage data share ($\approx 0.43$, at 120\,ms) above which \texttt{data.next\_wait} enters the compact $\tau_C{=}0.80$ candidate prefix.}
\label{fig:eval-results}
\end{figure}

Each baseline applies one stage-attribution rule to the same $[N,R,S]$ window matrix used by \sys{}, sharing windowing, schema validation, and tie tolerance, so the counts isolate the scoring rule. \emph{Per-stage max} and \emph{average} rank stages by their max/mean share; \emph{raw rank spread} ranks by $\sum_t(\max_r d_{t,r,s}-\median_r d_{t,r,s})$, a dispersion heuristic with no stage-attribution semantics; \emph{slowest-rank breakdown} reports the largest stage of the per-step slowest rank; \emph{rank-0 local total} ignores all other ranks.

\begin{table}[t]
\centering
\caption{Routing on E3 120\,ms DDP injection rows (5 scenarios $\times$ 2 rank counts $\times$ 5 seeds = 50). Candidate-hit counts are paired with average/maximum candidate-set size. \sys{} is the only method that both exactly decomposes exposed makespan and keeps the candidate set size two on every row. Forward/device CPU-wall routing and event side evidence are separated in Table~\ref{tab:forward-device-split}.}
\label{tab:routing-and-baselines}
\scriptsize
\begin{tabular}{@{}L{0.27\linewidth}crrrrr@{}}
\toprule
Method & Exact acct. & Top-1 & Top-2 & Cand.\ hit & Avg cand. & Max cand. \\
\midrule
\sys{}                 & yes & 40/50 & 50/50 & 50/50 & 2.00 & 2 \\
Per-stage max          & no  & 22/50 & 50/50 & 50/50 & 2.40 & 3 \\
Per-stage average      & no  & 20/50 & 40/50 & 40/50 & 2.30 & 3 \\
Raw rank spread        & no  & 40/50 & 40/50 & 50/50 & 2.20 & 3 \\
Slowest-rank breakdown & no  & 20/50 & 40/50 & 46/50 & 2.40 & 3 \\
Rank-0 local total     & no  & 20/50 & 40/50 & 40/50 & 2.16 & 3 \\
\bottomrule
\end{tabular}
\end{table}

\sys{} routes 50/50 rows into the top-2 candidate set and 40/50 into top-1. Per-stage max ties on top-2 but misses 28 top-1 calls because downstream exposed waits dominate the raw maximum, and the same summary overcounts $F_{t,S}$ by up to $\min(R,S)$ (Proposition~\ref{prop:max-overcount}); mean, slowest-rank, and rank-0 summaries miss data- and rank-tail cases more often. Raw rank spread ties \sys{} on top-1 but misses 10 top-2 cases, reaches its candidate-hit count only with a larger emitted set, and still has no stage-attribution semantics under the non-identifiability example of Section~\ref{sec:sync-wait}. The headline advantage is therefore the combination of \emph{exact additive accounting}, \emph{compact top-2 candidate routing}, and \emph{labels that scope the interpretation}. The candidate set is size two on all 50 rows (Table~\ref{tab:stagefrontier-final-top12}), and the candidate hit is stable for $\tau_C\in[0.70,0.90]$ (Table~\ref{tab:tau-sensitivity}). Lower-magnitude data tails fall below the compact-routing threshold rather than being misattributed: the mean \texttt{data.next\_wait} share rises $0.06\to0.43$ from 12 to 120 ms at 8 ranks (Figure~\ref{fig:eval-results}b). A profiler-enabled calibration places the cumulative-prefix crossing of $\tau_C{=}0.80$ between the 120 and 180 ms magnitudes ($0.795\to0.823$ at 8 ranks, $0.790\to0.818$ at 32), grounding the detectability threshold in measured shares. Selected 64/128-rank spot checks route communication and 180 ms data-tail injections into the top-2 in all checked seeds, while the 120 ms data row at 128 ranks (Delay/p50 $\approx 0.21$) likewise falls below the threshold rather than misrouting.

Forward/device injections are not claimed as a CPU-wall top-1 attribution: forward CUDA work launched on the main stream commonly becomes host-visible later in backward, so the broad-stage prefix legitimately ranks backward first. The row counts as a successful profiler routing only when the seeded forward stage stays in the compact top-2 \emph{and} the sampled event channel ($q\in\{0.05,1\}$) emits \texttt{forward\_device\_supported} (Table~\ref{tab:forward-device-split}). The synchronization-bearing callback rows show a magnitude boundary: 0/3 top-1 at 60 and 120 ms (top-2 3/3), then 3/3 top-1 at 180 and 240 ms; the magnitude calibration plus the A/B/A check below carry that family's claim.

\begin{table}[t]
\centering
\caption{Forward/device claim separation. CPU-wall frontier accounting supplies compact routing; optional CUDA-event side evidence supplies device support.}
\label{tab:forward-device-split}
\small
\begin{tabularx}{\linewidth}{@{}L{0.20\linewidth}L{0.20\linewidth}L{0.17\linewidth}Y@{}}
\toprule
Fault family & CPU-wall top-1 & CPU-wall top-2 & Event evidence \\
\midrule
Forward/device & not claimed (0/10 E3) & 10/10 E3 & \texttt{forward\_\allowbreak{}device\_\allowbreak{}supported} on event-enabled E5 rows \\
Forward/host & 10/10 E3 & 10/10 E3 & \texttt{forward\_\allowbreak{}host\_\allowbreak{}overhead\_\allowbreak{}suspected} when CPU-wall is high and event time is low \\
\bottomrule
\end{tabularx}
\end{table}

\subsection{Selected-window profiler comparison (RQ2)}
\label{sec:eval-profiler-comparison}
E9 compares \sys{} with PyTorch Profiler, HTA, and Nsight Systems on the same fixed 32-rank DDP window definition, as a router-vs-trace tradeoff study rather than a replacement claim: the heavier tools reproduce \sys{}'s broad-frontier ranking on a selected window, at much higher selected-window cost. Each tool is replayed on the same selected window so routing and trace costs are measured on identical work (three seeds per scenario, 180 ms hidden-rank injections, the inner 20 of 40 captured steps scored). To keep the comparison about the paper's claimed quantity, each heavy-profiler trace is reduced to the same ordered broad-stage matrix --- Kineto \texttt{record\_function} ranges for PyTorch Profiler and HTA, host-side NVTX ranges for Nsight --- and scored with the max-prefix frontier recurrence of Section~\ref{sec:method}; kernel, CUPTI, and NCCL timelines remain complementary origin/overlap evidence. The positive denominator covers four sync-bounded rows (\texttt{data\_tail}, \texttt{comm\_delay}, \texttt{fwd\_cuda\_compute}, \texttt{callback\_sync\_tail}) over three seeds. The \texttt{fwd\_cuda\_compute} row's 180\,ms injection is large enough that broad CPU-wall forward dominates the frontier and is recovered top-1 here, unlike the lower-magnitude 120\,ms E3 forward/device rows, which are not claimed as a CPU-wall top-1 attribution (Table~\ref{tab:forward-device-split}).

\begin{table}[t]
\centering
\caption{E9 selected-window profiler comparison at 32 ranks (three seeds per scenario; inner 20 of 40 captured steps scored). Top-1/top-2 are computed after reducing each trace to the same max-prefix frontier shares; artifact size, step overhead, and postprocessing are medians from no-fault selected-window rows; \sys{} scores inline and runs no trace-postprocessing pass (shown as \emph{none}). HTA consumes the PyTorch Profiler Kineto trace and so repeats its capture/runtime/storage cells. $\dagger$: within run-to-run noise; use the paired E1 bound (Table~\ref{tab:overhead-short}) for continuous \sys{} overhead.}
\label{tab:profiler-head-to-head}
\scriptsize
\setlength{\tabcolsep}{3pt}
\begin{tabular}{@{}lrrrrrrrr@{}}
\toprule
Tool & Pos.\ rows & Top-1 & Top-2 & Cand.\ hit & Cand.\ avg/max & Artifact & No-fault overhead & Postproc. \\
\midrule
\sys{} & 12 & 12/12 & 12/12 & 12/12 & 2.00/2 & 0.11 MB & within noise$^{\dagger}$ & none \\
PyTorch Profiler & 12 & 12/12 & 12/12 & 12/12 & 2.25/3 & 15.81 GB & 36.6\% & 43.5 s \\
HTA on Kineto & 12 & 12/12 & 12/12 & 12/12 & 2.25/3 & 15.81 GB & 36.6\% & 43.5 s \\
Nsight Systems & 12 & 12/12 & 12/12 & 12/12 & 2.00/2 & 1.22 GB & 50.3\% & 2.59 s \\
\bottomrule
\end{tabular}
\end{table}

Under the shared reducer, \sys{}, PyTorch Profiler/HTA, and Nsight all recover the four positive broad stages top-1 and top-2 across all three seeds (Table~\ref{tab:profiler-head-to-head}), and the ratio vector agrees closely: the largest single-stage share difference from \sys{} on any of the six broad stages (worst case over the per-seed share vectors and the four positive scenarios) is 0.039 for PyTorch/HTA and 0.024 for Nsight, below the default share tie tolerance $\eta_A=0.05$ (Table~\ref{tab:labeler-gates}), so the candidate-routing result is unchanged. \sys{} produces this from a median 0.11 MB evidence packet, versus a 15.81 GB Kineto trace (PyTorch Profiler/HTA, raising no-fault step time 36.6\% during capture) and 1.22 GB of Nsight reports (50.3\%). Perfect 12/12 top-1 agreement is partly structural --- each tool's output is collapsed onto the same broad-stage matrix and scored by the same recurrence --- so this tests whether each heavy trace's stage-range evidence is faithful enough to reproduce \sys{}'s exposed-makespan ranking, not whether the tools' native workflows agree end to end; those native analyses (operator/kernel drill-down, temporal/idle/overlap breakdowns) remain complementary. The host-only \texttt{callback\_host\_tail} control is excluded because no tool routes a callback without a synchronization boundary into the top-2 under this reducer: the cost is visible to the traces but not exposed as group delay, so the frontier correctly leaves it unranked.

\subsection{Overhead, removed-injection, and scoped checks (RQ4/RQ3)}
\label{sec:overhead}
E1 measures overhead with paired runs inside the same submitted workload and backend; the resampling unit is the paired run/window block, so the bound respects within-run dependence. Table~\ref{tab:overhead-short} reports 95\% CI upper bounds on throughput overhead through 128 ranks for Gloo and NCCL gather. Every positive upper bound is below the pre-registered 3\% (CPU-wall) and 1\% (event-channel) gates; the maxima are 0.181\% (16-rank NCCL CPU-wall) and 0.043\% (16-rank Gloo event). Across 105 no-fault rows and 315 windows all gathers completed and no row emitted a strong bottleneck label or downgraded to \texttt{telemetry\_limited}; the one-sided 95\% upper bound on the strong-label rate from 0/105 is 2.81\%, a sanity check at this scale rather than a fleet-grade false-alarm budget. Scale beyond 128 ranks, window-length sensitivity, and long-run gather backpressure are left to future evaluation.

\begin{table}[t]
\centering
\caption{E1 overhead scale; 95\% CI upper bounds on throughput overhead (\%). Each rank/backend/mode cell uses five paired submitted runs with 400-step windows; $\le 0$ denotes a bound at or below zero (overhead within run-to-run noise).}
\label{tab:overhead-short}
\small
\begin{tabular}{@{}rrrrr@{}}
\toprule
Ranks & Gloo CPU-wall & Gloo event-inc & NCCL CPU-wall & NCCL event-inc \\
\midrule
8   & 0.019 & 0.004 & 0.098 & $\le 0$ \\
16  & 0.160 & 0.043 & 0.181 & $\le 0$ \\
32  & 0.047 & 0.023 & 0.155 & $\le 0$ \\
64  & $\le 0$ & $\le 0$ & 0.083 & $\le 0$ \\
128 & 0.107 & $\le 0$ & 0.131 & $\le 0$ \\
\bottomrule
\end{tabular}
\end{table}

\paragraph{Removed-injection consistency.}
\label{sec:aba}
E6 runs three A/B/A windows per seed (3 seeds, 8 ranks, 200 steps each): baseline A1, a window B with a 120 ms callback-sync injection, and a removed-injection A2 under the same seed and allocation. Median step time goes $207.81\to294.12\to208.02$ ms and callback frontier share goes $1.75\%\to41.06\%\to1.75\%$, returning step time to within 0.21 ms of baseline (recovery ratio 0.998, well under the 8.63 ms tolerance). The callback is a stable top-2 candidate (0/3 top-1 at this magnitude), so the result reads as removed-injection consistency for a top-2 candidate, not a top-1 intervention attribution.

\paragraph{Gradient-accumulation and sharding scope checks (RQ3).}
\label{sec:scoped-checks}
A fixed-factor gradient-accumulation check (factor four, DDP \texttt{no\_sync}, 8 ranks, 5 seeds, ordered accumulation-indexed substages) routes data and backward top-1/top-2 on all rows; forward/device stays top-2 (co-critical with backward host time) with the event channel still emitting \texttt{forward\_device\_supported}, and ordered-vs-broad throughput ratios fall in $[0.999,1.001]$. A sharding check wraps the same transformer in PyTorch FSDP \texttt{FULL\_SHARD} and \texttt{ZeroRedundancyOptimizer} ZeRO-1 (8/16/32 ranks, seeds 0--2, 180 ms injections): all 90 sync-bounded positive rows route top-2 and 87/90 top-1 (the three misses are 32-rank ZeRO-1 callback/sync rows), while a host-local optimizer control without an adjacent barrier is 0/18 top-1/top-2: work visible to a rank but not exposed as group delay is correctly left unrouted.

\subsection{Operational evidence for monitoring (RQ5)}
\label{sec:cases}
Since March 2026, \sys{} has been enabled by default in a production NVIDIA AV training environment covering multiple model pipelines. Of five anonymized field deployments (Table~\ref{tab:cases-short}), the two strongest were checked against ground truth: QAT-B against a PyTorch Profiler capture and Mixed-A against an operator ablation (both detailed below). The deployments span sync-wait exposure (Vision-A), direct exposure (QAT-A, QAT-B), and mixed or substage-deep cases (Vision-B, Mixed-A). Two also recorded a post-fix throughput improvement (Vision-A $230\to340$ samples/s at 8 nodes, Vision-B $63\to85$ samples/s single-node).

\begin{table}[t]
\centering
\caption{Representative operational diagnoses. Frontier share is exposed-makespan share from max-prefix accounting; the claim level is the strongest evidence the follow-up established, not a throughput attribution.}
\label{tab:cases-short}
\footnotesize
\begin{tabularx}{\linewidth}{@{}L{0.08\linewidth}L{0.18\linewidth}YL{0.18\linewidth}L{0.16\linewidth}@{}}
\toprule
Case & Frontier signal & Follow-up evidence & Operator action & Claim level \\
\midrule
Vision-A & Data-tail from random ranks (small median, large tail) & Dataloader worker-to-main IPC stall & Main-process prefetch & Consistency \\
Vision-B & fwd 44\%, data 27\%, callback 24\%, bwd 5\% & \texttt{pin\_memory} off, worker count, image logging every 10 steps & Pin memory, worker tuning, reduced logging & Consistency \\
Mixed-A & Data 90\% with large delta-lag & Post-op CPU/numpy/I/O paths (op ablation) & Post-op relocation/vectorization & Attribution-confirmed routing \\
QAT-A & fwd 76\%, bwd 15\%, data 5\%, callback 4\% & QAT fake-quant dominates forward (no separate profiler capture) & QAT/forward overhead & Consistency \\
QAT-B & fwd $\approx$67\%, bwd $\approx$22\%, data $\approx$10\% & Profiler: fake-quant op with sync, transfers, many tiny kernels & Forward/QAT overhead & Profiler-corroborated routing \\
\bottomrule
\end{tabularx}
\end{table}

Two cases were checked against ground truth. In \textbf{QAT-B}, because a high CPU-wall forward share can reflect host overhead rather than device compute, the routing call was corroborated with a short PyTorch Profiler capture that placed the cost in genuine forward device work --- the quantization-aware-training fake-quant modules, whose per-element scalar synchronizations and tens of thousands of sub-$50\,\mu$s kernels dominate the trace. The always-on signal selected the stage and the heavy trace identified the operator-level cause (the intended router-to-profiler relationship); the suggested fusion and sync-removal optimizations were measured offline but judged too invasive to ship. In \textbf{Mixed-A}, the broad data $\approx 90\%$ frontier might be read as a dataloader problem, but the ordered sub-stage breakdown showed the workers essentially idle and the cost in a GPU post-processing stage; an op-level ablation found it was almost entirely one data-transform operator, about $27.7$\,s on the profiled step. Disabling that operator cut it to about $1.3$\,s but exposed a second transform of similar magnitude behind it, so the stage cost did not drop --- though the ablation still confirmed which operator the routing pointed to.

Vision-B also shows how a periodic stage cost hides from the median summaries common in throughput dashboards (Table~\ref{tab:median-vs-frontier}). The callback stage had a per-step median of about $2$\,ms --- roughly $0.13\%$ of the $1.526$\,s median step --- because its cost was concentrated in a spike that recurred every ten steps and reached a $7.17$\,s maximum, so a median dashboard reports it as negligible. Frontier accounting sums the per-step exposed advances and assigns the callback a $24\%$ exposed-makespan share, with the straggler summary flagging a single intermittent rank, and routes the investigator to it; a max or p95 view would reveal the magnitude but not its share of exposed step time. Configuration inspection confirmed the cause: an image-logging callback firing every ten steps.

\begin{table}[t]
\centering
\caption{Vision-B callback stage under three views on the same field window. The median view, common in throughput dashboards, hides the periodic spike; a max/p95 view shows magnitude but not exposed-makespan share; frontier accounting reports the share and routes to it. The cause was confirmed by configuration inspection.}
\label{tab:median-vs-frontier}
\small
\begin{tabularx}{\linewidth}{@{}L{0.27\linewidth}YL{0.22\linewidth}@{}}
\toprule
View & Callback stage reported as & Operator verdict \\
\midrule
Median (p50) dashboard & $\approx 2$\,ms ($0.13\%$ of the $1.526$\,s median step) & negligible \\
Max / p95 view & $7.17$\,s spike visible, but not its exposed-makespan share & magnitude only \\
\sys{} frontier & $24\%$ exposed-makespan share; single intermittent straggler rank & routed to callback \\
\bottomrule
\end{tabularx}
\end{table}

\subsection{Observed failure modes and negative cases}
\label{sec:failure-modes}
The cases above are positive routing outcomes; the complementary failure modes follow from a broad-stage, passive signal. The labeler gates (Table~\ref{tab:labeler-gates}) surface most of them as conservative labels rather than confident calls, but one is a confident misroute that the contract handles only with corroboration.
\begin{itemize}
\item \emph{Confident misroute on displaced work.} When forward CUDA work becomes host-visible downstream, the broad CPU-wall frontier confidently ranks the \emph{exposed} stage (backward) ahead of the origin stage (forward), yielding 0/10 top-1 in Section~\ref{sec:eval-routing}: a correct exposed-time call but a misleading origin, so forward/device is reported at top-2 with CUDA-event corroboration.
\item \emph{Unlocalized slowdowns.} Host CPU contention, a scheduling stall, or a filesystem hang can inflate total step time without concentrating in any one stage; \sys{} then reports a slow step with no confident stage, and node-list or GPU-utilization evidence is the next step.
\item \emph{A recurrent rank is not a node.} A persistent frontier-leader rank localizes a \emph{rank}, not a host; acting on it (e.g.\ draining a node) without rank-to-host mapping can target the wrong resource.
\item \emph{Threshold sensitivity and non-portable fixes.} Near a dominance or tie threshold, the leading stage can flip between windows, emitted as leader-switch or tie evidence; a correctly routed bottleneck can still lack a general remedy --- one field case routed correctly to a forward quantization cost with no low-overhead fix, and a data-prefetch change that recovered one workload left two others unchanged.
\end{itemize}
Apart from the forward/device misroute, no field diagnosis in Section~\ref{sec:cases} was overturned by a later heavy trace --- a consistency observation across the deployments, not a measured false-positive rate. The remaining cases downgrade to \texttt{telemetry\_limited}, \texttt{co\_critical}, or \texttt{role\_aware\_needed} rather than a single-stage call.

\section{Related Work}
\label{sec:related}
Prior approaches localize distributed-training slowdowns from rich inputs: trace and wait-state tools \emph{measure} wait, idle, and synchronization explicitly, while replay and causal methods \emph{infer} impact from event-trace critical paths, operator graphs, or virtual-speedup experiments. \sys{} needs neither: it infers the exposed-makespan attribution from a passive ordered vector of lumped stage durations --- no event trace, operator graph, or synchronized clock --- trading direct observability for an always-on signal cheap enough to leave running.

\paragraph{Framework profilers and trace analyzers.}
Framework profilers (PyTorch/TensorFlow Profiler, Nsight Systems~\cite{pytorchprofiler,tensorflowprofiler,nsightsystems}) and trace analyzers (HTA over Kineto~\cite{holistictraceanalysis}) give operator-, memory-, and CUDA-timeline resolution per trace and can attribute a stall causally by following a late rank back to the delaying event; that input is a captured trace too heavy to run continuously, so \sys{} sits one layer below, emitting a routing packet that says which window and stage class to ask a profiler about.

\paragraph{Instrumentation and straggler systems.}
Graph-replay systems (dPRO~\cite{hu2022dpro}, Chakra~\cite{sridharan2023chakra}) build a global dataflow graph and search optimizations, whereas \sys{} shows that ordered rank-local stage vectors alone --- not graphs, wait events, or synchronized clocks --- recover exact additive exposed-makespan accounting, small enough to leave on. Large-scale straggler platforms (FALCON~\cite{wu2024falcon}, GREYHOUND~\cite{wu2025greyhound}, MegaScale~\cite{jiang2024megascale}, what-if analyses~\cite{lin2025stragglers}) target $>$10\,000-GPU clusters and can ingest the \sys{} packet as a per-window signal alongside fleet state, while finer-grained Horovod/DeepSpeed and DDP/FSDP instrumentation~\cite{sergeev2018horovod,horovodtimeline,deepspeedcomm,deepspeedflops,li2020pytorchdistributed} exposes communication detail below the broad contract; Plumber~\cite{kuchnik2022plumber} diagnoses input-pipeline internals, whereas \sys{} only routes whether a data tail is exposed as synchronized step time and worth pulling such evidence on.

\paragraph{HPC wait-state and causal profiling.}
HPC wait-state and critical-path tools (Scalasca~\cite{geimer2008scalasca,geimer2010scalasca}, HPCToolkit~\cite{adhianto2010hpctoolkit}, TAU~\cite{shende2006tau}, wait-state/critical-path methods~\cite{boehme2010waitstates,boehme2012criticalpath}) attribute wait time on richer typed events than \sys{} collects, letting them separate independent co-critical paths from genuine wait; frontier accounting deliberately trades direct wait observation for an input of six floats per rank-step and a contract needing no event-tracing infrastructure, surfacing \texttt{co\_critical}, \texttt{role\_aware\_needed}, or \texttt{telemetry\_limited} where such a tool would split the path with side evidence. SyncPerf~\cite{alam2017syncperf} targets lock/condvar callsites in multithreaded programs. Causal profiling (Coz~\cite{curtsinger2015coz}) estimates impact via virtual speedups and tail-at-scale~\cite{dean2013tail} explains how rare component delays dominate; \sys{} borrows the goal of linking local slowdowns to throughput but injects no virtual speedups and claims no intervention impact from passive timers. MLPerf~\cite{mattson2020mlperf} measures workload-level performance, while \sys{} targets the layer below --- which stage, rank, role, or data path to investigate when a workload deviates from its expected throughput.

\section{Conclusion}
\label{sec:limitations}
Frontier accounting is a general identity: it applies to any training loop that reports an aligned ordered stage vector. We demonstrate it on homogeneous synchronous DDP transformer training through 128 ranks, with scoped FSDP \texttt{FULL\_SHARD} and ZeRO-1 checks; broader regimes are left to future evaluation. The one thing the signal does not do, by design, is assign root cause: it accounts for exposed time exactly and routes a heavier profiler to the window, stage, and rank to examine, and a frontier advance reads as a \emph{cause} only under the synchronization-wait model of Section~\ref{sec:sync-wait}, which the evidence labels flag where it does not hold.

The analytic core is unconditional: frontier advances telescope to the observed exposed step time for aligned stage vectors, and per-stage maxima and averages have formal $\min(R,S)$ overcounting and $R$ undercounting failure modes against it. The reference $O(RNS)$ window payload is small enough to run continuously and structured enough that heavier profiler collection fires only when the labels mark a window actionable --- a profiler-router that decides when, where, and for which rank a heavy trace is worth its cost.

\FloatBarrier
\clearpage
\appendix

\section{Stage Taxonomy and Ordered-Stage Contract}
\label{app:taxonomy}

\sys{} records a fixed list of non-overlapping trainer-stage spans on each rank using CPU wall-clock time (Table~\ref{tab:stages}). The forward/loss row may carry an optional side-channel metric \texttt{model.fwd\_loss\_\allowbreak cuda\_event\_ms}, a sampled device-stream elapsed time with a ready/missing state that is not an ordered stage and never enters the prefix vector, preserving exact exposed-makespan semantics.

\begin{table}[t]
\centering
\caption{Default broad stage taxonomy. Categories are intended for continuous accounting and diagnosis routing, not kernel-level attribution.}
\label{tab:stages}
\small
\begin{tabular}{@{}L{0.35\linewidth}L{0.61\linewidth}@{}}
\toprule
Stage & Meaning \\
\midrule
\texttt{data.next\_wait} & Main-process wait for the batch consumed by this logical step. \\
\texttt{model.fwd\_loss\_cpu\_wall} & Forward pass and loss construction as exposed to the host. \\
\texttt{model.backward\_cpu\_wall} & Backward pass and exposed distributed communication or synchronization. \\
\texttt{callbacks.cpu\_wall} & Callback, logging, checkpoint, and hook work around the step. \\
\texttt{optim.step\_cpu\_wall} & Optimizer step and related host-visible work. \\
\texttt{step.other\_cpu\_wall} & Residual: step wall time not covered by explicit spans. \\
\bottomrule
\end{tabular}
\end{table}

Let $w_{t,r}$ be the rank-local step wall time. The signed closure error is $e_{t,r}=w_{t,r}-\sum_{s\ne\mathrm{other}} d_{t,r,s}$, the residual stage is $d_{t,r,\mathrm{other}}=\max(0,e_{t,r})$, and the overlap error is $o_{t,r}=\max(0,-e_{t,r})$; large positive residual means the taxonomy is missing exposed work, large overlap error means spans are nested or double-counted, and both can downgrade a diagnosis to \texttt{telemetry\_limited}. Frontier accounting requires a common ordered boundary list within each diagnosis group: a stage may be broad but must be a contiguous, non-overlapping interval, and the instrumentation distinguishes \emph{ordered frontier stages} (in the prefix vector), \emph{side-channel probes} (nested but never in the prefix vector), and \emph{refined ordered schemas} (substages that replace a broad parent). Table~\ref{tab:contract-checks} lists the contract checks and conservative fallbacks. \sys{} also records a data wait on the logical step that consumes the batch, not the loop iteration that called \texttt{next(dataloader)}, so a step-$t{+}1$ data stall is not shifted into step-$t$ compute.

\begin{table}[t]
\centering
\caption{Ordered-stage contract checks and conservative fallback behavior.}
\label{tab:contract-checks}
\small
\begin{tabularx}{\linewidth}{@{}L{0.35\linewidth}Y@{}}
\toprule
Check & Fallback if violated \\
\midrule
Only one ordered frontier stage active per rank & Reject nested ordered spans; keep nested measurements only when declared \texttt{side\_channel=true}. \\
Common schema version, ordered stage list, and stage-order hash inside a diagnosis group & Close the current window and emit \texttt{telemetry\_limited}; do not merge mismatched rows. \\
All ranks in the diagnosis group present at the window boundary & Emit local summaries if safe, but downgrade distributed labels to \texttt{telemetry\_limited}. \\
Residual closure and overlap error within configured thresholds & Keep \texttt{frontier\_accounting} if the vector is usable, but suppress stronger diagnosis labels. \\
Role metadata sufficient for the chosen group & Emit \texttt{role\_aware\_needed} rather than applying a global frontier across incompatible roles. \\
\bottomrule
\end{tabularx}
\end{table}

\FloatBarrier
\section{Full Diagnosis Label Table}
\label{app:full-labels}

Table~\ref{tab:labels-full} lists every label emitted by the deployed labeler, including the forward-event variants and the gradient-accumulation ambiguity label.

\begin{table}[t]
\centering
\caption{Full diagnosis labels and downgrade triggers.}
\label{tab:labels-full}
\small
\begin{tabularx}{\linewidth}{@{}L{0.30\linewidth}L{0.10\linewidth}YL{0.18\linewidth}@{}}
\toprule
Label & Axis & Meaning & Downgrade trigger \\
\midrule
\texttt{frontier\_accounting} & Accounting & Additive exposed-makespan decomposition only. & None; this is the base claim. \\
\texttt{likely\_sync\_wait} & Model fit & Frontier, lag, and workload semantics support upstream wait-induced delay. & Weak role metadata, high closure error, or low dominance margin. \\
\texttt{sync\_wait\_dependent} & Model fit & Frontier share is high but static direct gain is low; actionability depends on the wait model. & Direct evidence or traces show independent productive work. \\
\texttt{direct\_exposure} & Direct gain & Raw duration, spread, and clipped static sensitivity agree with the frontier stage. & Static gain small relative to frontier share. \\
\texttt{forward\_device\_supported} & Device evidence & Forward/loss frontier evidence is accompanied by high sampled CUDA-event forward time. & Event samples are missing, low, or scope-limited. \\
\texttt{forward\_spillover\_suspected} & Device evidence & Forward CUDA-event time is high, but host-visible exposed cost appears later, often in backward. & Profiler or stream evidence places the work elsewhere. \\
\texttt{forward\_host\allowbreak\_overhead\allowbreak\_suspected} & Device evidence & CPU-wall forward/loss is high while sampled CUDA-event forward time is low. & Event scope misses side-stream work or host evidence points elsewhere. \\
\texttt{forward\_event\allowbreak\_scope\allowbreak\_limited} & Device evidence & Event ready ratio is low or the trainer uses streams not covered by the event markers. & Stream coverage or readiness improves. \\
\texttt{co\_critical} & Ambiguity & Multiple stages or ranks can plausibly remain bottlenecks after optimizing one stage. & Side evidence separates the paths. \\
\texttt{gradient\_accumulation\allowbreak\_ambiguous} & Ambiguity & Accumulation microsteps were collapsed or mixed; data/backward displacement cannot be separated. & Collect unfolded accumulation-indexed substages. \\
\texttt{role\_aware\_needed} & Role clarity & Rank roles differ enough that global rank aggregation is unsafe. & Role grouping becomes available. \\
\texttt{telemetry\_limited} & Telemetry & Residuals, overlap error, gather failures, metadata gaps, or missing probes cap confidence. & Telemetry quality is restored. \\
\bottomrule
\end{tabularx}
\end{table}

\FloatBarrier
\section{Labeler Default Gates}
\label{app:labeler-gates}

The labeler is deterministic given the stage matrix, schema metadata, optional side evidence, and threshold configuration: it validates the ordered-stage contract and schema/world membership, computes prefixes, frontier advances, shares, and the routing set, computes lag/delta-lag/tie/leader-switch evidence and clipped direct-exposure gain, applies telemetry-quality and role-aware gates, evaluates optional CUDA-event or communication side evidence, and emits labels, the routing set, the ambiguity evidence set, and downgrade reasons. Table~\ref{tab:labeler-gates} lists the default gates. The co-criticality logic uses an ambiguity set $E_{\mathrm{amb}}=C_A\cup C_G$ (top stages by frontier share and by clipped static gain): a window is \texttt{sync\_wait\_dependent} when $A_{s_1}>\gamma_A$, $G_{s_1}<\gamma_G$, and a caller-supplied model-fit indicator $W_{s_1}=1$; the same low-gain configuration with $W_{s_1}=0$ is \texttt{co\_critical}, and near-tied or high-leader-switch windows are likewise downgraded. The artifact labeler uses the safe default $W_s=0$.

\begin{table}[H]
\centering
\caption{Default labeler gates used as conservative starting points.}
\label{tab:labeler-gates}
\small
\begin{tabularx}{\linewidth}{@{}L{0.35\linewidth}L{0.18\linewidth}Y@{}}
\toprule
Gate & Default & Purpose \\
\midrule
Closure residual share & $\le 0.05$ & Suppress strong labels when explicit spans do not close the step. \\
Overlap error share & $\le 0.01$ & Detect nested or double-counted ordered stages. \\
Missing-rank count & $0$ & Require complete distributed evidence for distributed labels. \\
Event-ready ratio & $\ge 0.8$ & Permit forward-event support labels only when enough sampled pairs complete. \\
Minimum event samples & $\ge 5$ & Avoid device-evidence labels from a single ready event. \\
Frontier-share dominance & $\gamma_A=0.4$ & Require a leading exposed-makespan share for strong stage labels. \\
Static-gain threshold & $\gamma_G=0.1$ & Separate direct exposure from wait-dependent frontier share. \\
Share/gain tie tolerance & $\eta_A=\eta_G=0.05$ & Emit co-critical evidence for near ties. \\
Leader-switch tolerance & $\eta_Q$, $\gamma_{\mathrm{switch}}$, $\gamma_{\mathrm{elig}}$ & Count switches only between confident unique leaders. \\
Candidate cumulative threshold & $\tau_C=0.80$ & Define compact candidate routing set. \\
Model-fit indicator & default $W_s=0$ & Do not infer sync-wait dependence without workload or side evidence. \\
\bottomrule
\end{tabularx}
\end{table}

\FloatBarrier
\section{Frontier Bounds and Measurement-Error Stability}
\label{app:frontier-bounds-proofs}
\label{app:measurement-error}

Notation matches Section~\ref{sec:method}.

\begin{proof}[Proof of Proposition~\ref{prop:max-overcount}]
Let $u_s=\argmaxop_r d_{t,r,s}$ and regroup the stage maxima by maximizing rank: $M_t=\sum_{r}\sum_{s:u_s=r} d_{t,r,s}$. For each rank $r$, $\sum_{s:u_s=r} d_{t,r,s}\le\sum_s d_{t,r,s}\le F_{t,S}$, so summing over the at most $R$ ranks gives $M_t\le R\,F_{t,S}$; the trivial per-stage bound gives $M_t\le S\,F_{t,S}$, hence $M_t\le\min(R,S)\,F_{t,S}$. Also $F_{t,S}=\max_r\sum_s d_{t,r,s}\le\sum_s\max_r d_{t,r,s}=M_t$. The ratio $\min(R,S)$ is reached by assigning duration $D$ to each of $\min(R,S)$ distinct rank-stage pairs and zero elsewhere, with no rank receiving more than one nonzero duration.
\end{proof}

\begin{proof}[Proof of Proposition~\ref{prop:avg-underreport}]
Since each rank total is at most $F_{t,S}$, the average rank total is at most $F_{t,S}$; since the maximum rank total is at most the sum of rank totals, the average is at least $F_{t,S}/R$. The lower bound is reached when one rank has total $D$ and all others zero.
\end{proof}

\begin{proposition}[Measurement-error stability]
If every measured stage duration has absolute error at most $\epsilon$, then each prefix frontier has error at most $s\epsilon$ and each frontier advance has error at most $(2s-1)\epsilon$. Window shares are stable when the leading-stage margin is large compared with these errors and the denominator $\sum_{t}F_{t,S}$ is bounded away from zero.
\end{proposition}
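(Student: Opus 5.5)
The plan is to propagate the per-duration error bound through the three operations that define the frontier: summation into prefixes, maximization over ranks, and differencing of consecutive frontiers. Write the measured durations as $\hat d_{t,r,j}=d_{t,r,j}+\delta_{t,r,j}$ with $|\delta_{t,r,j}|\le\epsilon$. Let $\hat P,\hat F,\hat a$ be the quantities computed from $\hat d$.

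\emph{Prefix step.} By the triangle inequality, $|\hat P_{t,r,s}-P_{t,r,s}|\le\sum_{j\le s}|\delta_{t,r,j}|\le s\epsilon$ for every rank.

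\emph{Frontier step.} I will use the elementary fact that the maximum is $1$-Lipschitz in the sup norm: $|\max_r x_r-\max_r y_r|\le\max_r|x_r-y_r|$. Applied to the prefixes, this gives $|\hat F_{t,s}-F_{t,s}|\le s\epsilon$, which is the first claim. For completeness I would include the two-line argument. If $r^\star$ attains $\hat F_{t,s}$, then $\hat F_{t,s}-F_{t,s}\le \hat P_{t,r^\star,s}-P_{t,r^\star,s}\le s\epsilon$, and the reverse inequality follows symmetrically.

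\emph{Advance step.} The naive bound $|\hat a_{t,s}-a_{t,s}|\le s\epsilon+(s-1)\epsilon=(2s-1)\epsilon$ follows directly from $a_{t,s}=F_{t,s}-F_{t,s-1}$ and the frontier bound at $s$ and $s-1$. With $F_{t,0}=\hat F_{t,0}=0$, the case $s=1$ gives $\epsilon$, which is consistent. This is exactly the stated bound, so no sharper argument via the slack identity is needed.

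\emph{Window shares.} This part is qualitative. Summing over $N$ steps, the numerator of $A_s$ moves by at most $N(2s-1)\epsilon$ and the denominator by at most $NS\epsilon$. Writing $\hat A_s-A_s$ via the standard ratio-perturbation bound gives
\[
|\hat A_s-A_s|\le\frac{N(2s-1)\epsilon+A_s\,NS\epsilon}{\sum_t \hat F_{t,S}},
\]
which is small when $\sum_tF_{t,S}$ is bounded away from zero relative to $NS\epsilon$. If the gap between the leading and second shares exceeds twice this perturbation, the leader, and likewise the membership of $C_{\mathrm{route}}$ away from the $\tau_C$ threshold, is unchanged.

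The main obstacle is only in phrasing this last part precisely, since the statement is informal there. I would present the explicit ratio bound above and state the ranking-preservation condition as a corollary, rather than try to formalize ``stable.''
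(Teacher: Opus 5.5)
Your proof is correct and follows the paper's argument essentially step for step: the triangle inequality on the prefix sums, the $1$-Lipschitz property of the maximum to get $s\epsilon$ on $F_{t,s}$, and the triangle inequality on $F_{t,s}-F_{t,s-1}$ to get $s\epsilon+(s-1)\epsilon=(2s-1)\epsilon$. Your explicit ratio-perturbation bound for $A_s$ goes beyond the paper, which only says qualitatively that shares divide by the window denominator and that the implementation reports raw advances when that denominator is near zero; one small fix is that the bound depends on the stage index, so leader preservation needs the gap to exceed the sum of the two stages' perturbation bounds, not twice a single bound.
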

\begin{proof}
Each prefix sums at most $s$ perturbed durations, so its error is at most $s\epsilon$, and the max of values changes by no more than the maximum perturbation, so $F_{t,s}$ changes by at most $s\epsilon$. Since $a_{t,s}=F_{t,s}-F_{t,s-1}$, its error is at most $(2s-1)\epsilon$ by the triangle inequality. Normalized shares divide by the window denominator; when that is near zero the implementation reports raw advances or downgrades percentage labels.
\end{proof}

\FloatBarrier
\section{Extended Routing Results}
\label{app:extended-eval}

\begin{table}[H]
\centering
\caption{Full hidden-rank routing summary. E3 rows use DDP without gradient accumulation and five seeds per rank count; callback rows use a three-seed synchronization study. The daggered callback/host row is a host-only scope-boundary control with no immediate barrier. Delay/p50 normalizes by the median no-fault step p50 for the matching rank count.}
\label{tab:stagefrontier-final-top12}
\scriptsize
\begin{tabular}{@{}llrrrrrrr@{}}
\toprule
Scenario & Ranks & Delay & Delay/p50 & Rows & Top-1 match & Top-2 routing & Cand.\ hit & Cand.\ size \\
\midrule
Data & 8 & 120 ms & 0.58 & 5 & 5/5 & 5/5 & 5/5 & 2.0 \\
Data & 32 & 120 ms & 0.51 & 5 & 5/5 & 5/5 & 5/5 & 2.0 \\
Backward & 8 & 120 ms & 0.58 & 5 & 5/5 & 5/5 & 5/5 & 2.0 \\
Backward & 32 & 120 ms & 0.51 & 5 & 5/5 & 5/5 & 5/5 & 2.0 \\
Backward/comm & 8 & 120 ms & 0.58 & 5 & 5/5 & 5/5 & 5/5 & 2.0 \\
Backward/comm & 32 & 120 ms & 0.51 & 5 & 5/5 & 5/5 & 5/5 & 2.0 \\
Forward/device & 8 & 120 ms & 0.58 & 5 & 0/5 & 5/5 & 5/5 & 2.0 \\
Forward/device & 32 & 120 ms & 0.51 & 5 & 0/5 & 5/5 & 5/5 & 2.0 \\
Forward/host & 8 & 120 ms & 0.58 & 5 & 5/5 & 5/5 & 5/5 & 2.0 \\
Forward/host & 32 & 120 ms & 0.51 & 5 & 5/5 & 5/5 & 5/5 & 2.0 \\
Callback/sync & 8 & 120 ms & 0.58 & 3 & 0/3 & 3/3 & 3/3 & 2.0 \\
Callback/host$^\dagger$ & 8 & 120 ms & 0.58 & 3 & 0/3 & 0/3 & 0/3 & 2.0 \\
\bottomrule
\end{tabular}
\end{table}

\begin{table}[t]
\centering
\caption{Sensitivity of the E3 \sys{} routing candidate set to the cumulative threshold $\tau_C$, recomputed from the stored stage scores for the same 50 rows as Table~\ref{tab:routing-and-baselines}. Higher thresholds preserve candidate hit but reduce compactness.}
\label{tab:tau-sensitivity}
\small
\begin{tabular}{@{}rrrr@{}}
\toprule
$\tau_C$ & Cand.\ hit & Avg cand.\ size & Max cand.\ size \\
\midrule
0.70 & 50/50 & 2.00 & 2 \\
0.75 & 50/50 & 2.00 & 2 \\
0.80 & 50/50 & 2.00 & 2 \\
0.85 & 50/50 & 2.20 & 3 \\
0.90 & 50/50 & 3.00 & 3 \\
\bottomrule
\end{tabular}
\end{table}

\FloatBarrier
\section{Artifact Reproducibility}
\label{app:artifact-checklist}

The evaluation is backed by the \path{stagefrontier-artifact} repository. Validation-cluster rows use the NVIDIA PyTorch 24.12 container (PyTorch 2.6.0a0, CUDA 12.6), NCCL 2.23.4 for training, Gloo telemetry gather for the main routing matrix (Gloo and NCCL for the overhead-scale matrix), bf16 transformer runs, deterministic seeds recorded in each structured row, and the default labeler gates of Table~\ref{tab:labeler-gates}. E8 additionally records the sharding wrapper, FSDP strategy, ZeRO stage, and CUDA memory counters; E9 records the selected-window tool mode, capture mode, Kineto/Nsight artifact sizes, and HTA ingestion status. Local CPU smoke tests exercise the command surface but do not regenerate paper tables; E0--E9 paper rows require GPU Slurm allocations. The artifact's \texttt{paper\_outputs} snapshots archive the validated structured summary files and provenance hashes for the CSV tables, and an \texttt{artifact\_manifest.csv} maps each evaluation table to its source artifact and regeneration command (checked-in Slurm wrappers for E0--E9 and the callback synchronization study).

\begin{table}[t]
\centering
\caption{Experiment groups referenced by the evaluation. All groups except E7 disable gradient accumulation; E9 is a selected-window profiler comparison, not a continuous deployment.}
\label{tab:experiment-groups}
\footnotesize
\begin{tabularx}{\linewidth}{@{}L{0.09\linewidth}YL{0.32\linewidth}@{}}
\toprule
Group & Configuration & Role \\
\midrule
E0 & 8 ranks, no fault, $q\in\{0,0.05\}$, 600 steps & Telemetry completeness; no stray sleeps or CUDA syncs \\
E1 & 8--128 ranks, seeds 0--4, Gloo and NCCL gather, paired logger-off / CPU-wall / $q{=}0.05$ rows & Paired-bootstrap overhead through 128 ranks \\
E2 & 8 ranks, seeds 0--2, 120\,ms data, comm, and forward injections & Small-scale routing pilot \\
E3 & 8 and 32 ranks, seeds 0--4, 12/30/60/120\,ms hidden-rank delays & Detectability curve; top-1/top-2 routing \\
E4 & 8 and 32 ranks, profiler spot checks, 150/180\,ms data-tail & Separates candidate-boundary effects from data-stage attribution \\
E5 & 8 and 32 ranks, $q\in\{0,0.05,1\}$ CUDA-event side channel & Forward-device vs.\ forward-host side evidence \\
Callback & 8 ranks, seeds 0--2, 60--240\,ms sync-bearing and 120\,ms host-only rows & Sync vs.\ host-only callback; top-1/top-2 boundary \\
E6 & 8 ranks, seeds 0--2, A/B/A with a 120\,ms sync callback in B only & Removed-injection consistency \\
E7 & 8 ranks, seeds 0--4, accumulation factor 4 with DDP \texttt{no\_sync} & Gradient-accumulation ordered-stage check \\
E8 & 8/16/32 ranks, seeds 0--2, FSDP \texttt{FULL\_SHARD} and \texttt{ZeroRedundancyOptimizer} ZeRO-1, 180\,ms faults & Sharded-data-parallel scope spot check \\
E9 & 32 ranks, seeds 0--2, 40-step capture (inner 20 scored), 180\,ms faults & \sys{} vs.\ PyTorch Profiler, HTA, and Nsight cost \\
Scale & 64/128 ranks, 120\,ms comm and 120--360\,ms data-tail faults & Routing persistence beyond 8/32 ranks \\
\bottomrule
\end{tabularx}
\end{table}

\FloatBarrier
\bibliographystyle{abbrvnat}
\bibliography{references}

\end{document}